\begin{document}

\title{Algorithmic and Hardness Results for the Colorful Components Problems}
\author{Anna Adamaszek\inst{1} \and Alexandru Popa\inst{2}}

\institute{Max-Planck-Institut f\"ur Informatik, Saarbr\"ucken, Germany,\\        \texttt{anna@mpi-inf.mpg.de} 
\and Faculty of Informatics, Masaryk University, Brno, Czech Republic,\\ \texttt{popa@fi.muni.cz}
}

\maketitle
\date{ }

\begin{abstract}
In this paper we investigate the \emph{colorful components} framework, motivated by applications emerging from comparative genomics~\cite{Sankoff11}. The general goal is to remove a collection of edges from an undirected vertex-colored graph $G$ such that in the resulting graph $G'$ all the connected components are \emph{colorful} (i.e., any two vertices of the same color belong to different connected components). We want $G'$ to optimize an objective function, the selection of this function being specific to each problem in the framework.

We analyze three objective functions, and thus, three different problems, which are believed to be relevant for the biological applications: minimizing the number of singleton vertices, maximizing the number of edges in the transitive closure, and minimizing the number of connected components.

Our main result is a polynomial time algorithm for the first problem. This result disproves the conjecture of Zheng et al.~\cite{ZSLS11} that the problem is $ NP$-hard (assuming $P \neq  NP$). Then, we show that the second problem is $ APX$-hard, thus proving and strengthening the conjecture of Zheng et al.~\cite{ZSLS11} that the problem is $ NP$-hard. Finally, we show that the third problem does not admit polynomial time approximation within a factor of $|V|^{1/14 - \epsilon}$ for any $\epsilon > 0$, assuming $P \neq NP$ (or within a factor of $|V|^{1/2 - \epsilon}$, assuming $ZPP \neq NP$).
\end{abstract}

\section{Introduction}

In this paper we consider the following framework.

\vskip10pt
{\sc Colorful components framework:}
Given a simple, undirected graph $G=(V,E)$, and a coloring $c:V \to C$ of the vertices with colors from a given set $C$, remove a collection of edges $E' \subseteq E$ from the graph such that each connected component in $G' = (V, E \backslash E')$ is a \emph{colorful component} (i.e., it does not contain two identically colored vertices). We want the resulting graph $G'$ to be optimal according some fixed \emph{optimization measure}.
\vskip10pt

In this paper we consider three optimization measures and, respectively, three different problems: \emph{Minimum Singleton Vertices (MSV)}, \emph{Minimum Edges in Transitive Closure (MEC)}, and \emph{Minimum Colorful Components (MCC)}.
We now introduce the optimization measures for all these problems. 

\begin{problem}[Minimum Singleton Vertices]
The goal is to minimize the number of connected components of $G'$ that consist of one vertex.
\end{problem}


\begin{problem}[Maximize Edges in Transitive Closure]
The goal is to maximize the number of edges in the transitive closure of $G'$. 
\end{problem}

If a graph consists of $k$ connected components, each containing respectively $a_1, a_2, \dots, a_k$ vertices, the number of edges in the transitive closure equals 
$$ \sum_{i=1}^k \frac{a_i \cdot (a_i - 1)}{2} \enspace.$$


\begin{problem}[Minimum Colorful Components]
The goal is to minimize the number of connected components in $G'$.
\end{problem}

The first two problems have been introduced in~\cite{ZSLS11}, while the third one is newly introduced in this paper.

\paragraph*{Motivation.}

The colorful components framework is motivated by applications originating from comparative genomics~\cite{Sankoff11,ZSLS11}, which is a fundamental branch of bioinformatics that studies the relationship of the genome structure between different biological species. The information achieved from this field can help scientists to 
improve the understanding of the structure and the functions of human genes and, consequently, find treatments for many diseases~\cite{Mushegian10}.  

As pointed out in~\cite{Sankoff11,ZSLS11}, one of the key problems in this area, the multiple alignment of gene orders, can be captured as a graph theoretical problem, using the colorful components framework. We refer the reader to~\cite{ZSLS11}  for an overview of the connection between the multiple alignment of gene orders and the graph theoretic framework considered, and for a discussion about the biological motivation of two particular problems we consider, Minimum Singleton Vertices and Maximize Edges in Transitive Closure.

\paragraph*{Related work.}

We now discuss the collection of known problems which fit into the connected components framework. 

We start with a problem named either \emph{Colorful Components}~\cite{BHKNTU12,BHKN13} or \emph{Minimum Orthogonal Partition}~\cite{HLZ00,ZSLS11}, since this problem has received the most attention so far. In this problem the objective function is to minimize the number of edges removed from $G$ to obtain the graph $G'$ in which all the components are colorful. Bruckner et al. show ~\cite{BHKNTU12} that the problem is $ NP$-hard for three or more colors and they study fixed parameter tractable algorithms for the problem. Their $ NP$-hardness reduction can be modified slightly (starting the reduction from a version of 3SAT when each variable occurs only $O(1)$ times, instead of from the general 3SAT) to show the APX-hardness of the problem.  Zheng et al.~\cite{ZSLS11} and Bruckner et al.~\cite{BHKN13} study heuristic approaches for the problem, and He et al.~\cite{HLZ00} present an approximation algorithm for some special case of the problem. As the general problem is a special case of the Minimum Multi-Multiway Cut, it admits a $O(\log |C|)$ approximation algorithm~\cite{AL07}.


Other objective functions have been proposed, with the hope that some of them are both tractable and biologically meaningful. The MSV and the MEC problems have been introduced by Zheng et al.~\cite{ZSLS11}, who presented heuristic algorithms for the problems, without giving any worst-case approximation guarantee. They also conjectured both problems to be NP-hard. We are not aware of any other results concerning the MSV and MEC problems, or of any previous research on the MCC problem.

%

\paragraph*{Our results.}

Our main result is a polynomial time \emph{exact} algorithm for the MSV problem, presented in Section~\ref{sec:poly_algo}. This disproves the conjecture of Zheng et al.~\cite{ZSLS11} that the problem is $ NP$-hard (assuming $ P \neq  NP$).
Our algorithm maintains a feasible solution $G'=(V,E')$ for the MSV problem, starting with an empty graph $G'=(V,\emptyset)$. Then, in each step $G'$ is modified by applying to it a carefully chosen alternating path $p$, starting at a singleton vertex. The alternating path consists of the edges of $G$, and its every second edge is in $G'$. Applying $p$ to $G'$ means that the edges from $p$ which are not in $G'$ are added to $G'$, and at the same time the edges of $p$ which are in $G'$ are removed from $G'$. The algorithm ensures that at each step $G'$ is a feasible solution to the problem, and satisfies an invariant that all connected components in $G'$ are either singletons, edges or stars. In the analysis we show that when the algorithm does not find any new alternating path, the number of singleton components in $G'$ matches the lower bound presented in Section \ref{subsec:msv_lb}.

In Section~\ref{MEC} we study the MEC problem and we show that the problem is $ NP$-hard and $ APX$-hard when the number of colors in the graph is at least $4$. This proves the conjecture of Zheng et al~\cite{ZSLS11}. We show the result via a reduction from the version of the MAX-3SAT problem, where each variable appears at most some constant number of times in the formula (see~\cite{Ausiello1999}, Section 8.4).

Finally, in Section~\ref{MCC} we consider the MCC problem, which is introduced for the first time in this paper. We prove that MCC does not admit polynomial time approximation within a factor of $|V|^{1/14 - \epsilon}$, for any $\epsilon > 0$, unless $P = NP$ (or within a factor of $|V|^{1/2 - \epsilon}$, unless $ZPP = NP$), even if each vertex color appears at most two times. We show the inapproximability result via a reduction from the Minimum Clique Partition problem which is equivalent to Minimum Graph Coloring~\cite{PM81}.

Due to space constraints some proofs have been moved to the appendix.

\section{A Polynomial Time Exact Algorithm for the MSV}
\label{sec:poly_algo}

In this section we present a polynomial time algorithm {\sc MSVexact} which finds an optimal solution for the MSV problem. First, in Section \ref{subsec:msv_lb} we show a lower bound on the number of singleton vertices in any feasible solution for the problem. Then, in Section \ref{subsec:msv_alg} we describe the algorithm, with its key procedure presented in Section \ref{subsec:msv_alternating_path}. The analysis of the algorithm is then performed in Section \ref{subsec:msv_analysis}.

\subsection{Lower Bound}\label{subsec:msv_lb}

Let $G=(V,E)$, together with a coloring $c: V \to C$, be an input instance for the MSV problem.
For any color $c$ let $V_c \subseteq V$ denote the set of vertices of color $c$. For any set of vertices $V' \subseteq V$ we denote by $N(V')$ the set of neighbors of $V'$ in $G$, i.e. $N(V') = \{v \in V \setminus V': \exists v' \in V'  \ (v',v) \in E \}$. For any set of colors $C' \subseteq C$ and set of vertices $V' \subseteq V$ we denote by $N_{C'}(V')$ the set of neighbors of $V'$ in $G$ which have colors in $C'$, i.e. $N_{C'}(V') = \{ v \in N(V'): c(v) \in C'\}$.

\begin{lemma}
\label{lem:MSV_lb}
For any color $c$ let 
$$s_c = \max_{V' \subseteq V_c} (|V'| -|N_{C \setminus \{c\}}(V')|)\enspace.$$
Then in every feasible solution for the MSV problem there are at least $s_c$ singletons of color $c$.
\end{lemma}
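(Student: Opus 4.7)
The plan is to fix an arbitrary feasible solution $G' = (V, E \setminus E')$ and a set $V' \subseteq V_c$ attaining the maximum in the definition of $s_c$, and then to exhibit an injective map from the non-singleton vertices of $V'$ (within $G'$) into $N_{C \setminus \{c\}}(V')$. Once such an injection is produced, the number of non-singleton vertices in $V'$ is at most $|N_{C \setminus \{c\}}(V')|$, so the number of singleton vertices in $V'$ is at least $|V'| - |N_{C \setminus \{c\}}(V')| = s_c$. Since every vertex of $V'$ has color $c$, this lower-bounds the number of singletons of color $c$ in $G'$, which is exactly what the lemma claims.

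To construct the injection, for each non-singleton $v \in V'$ I would pick an arbitrary neighbor $n(v)$ of $v$ in $G'$; such a neighbor exists precisely because $v$ is not an isolated vertex of $G'$. Two properties need to be checked. First, $n(v) \in N_{C \setminus \{c\}}(V')$: since $E(G') \subseteq E$, the vertex $n(v)$ is a $G$-neighbor of $v \in V'$, and since the connected component of $v$ in $G'$ is colorful, $c(n(v)) \neq c(v) = c$. Second, the map $v \mapsto n(v)$ is injective: if $v_1, v_2 \in V'$ are distinct non-singleton vertices, then $c(v_1) = c(v_2) = c$ and colorfulness of components forces $v_1, v_2$ to lie in different components of $G'$; hence $n(v_1)$ and $n(v_2)$ lie in different components of $G'$ and are in particular distinct.

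The whole argument is essentially a Hall-type counting step, so there is no real obstacle — the only thing to be careful about is that the ``witnesses'' $n(v)$ must all be counted inside $N_{C \setminus \{c\}}(V')$ (not just inside $N(V')$), which is precisely why the set $N_{C \setminus \{c\}}(V')$ appears in the statement and is enforced by the colorfulness condition. No extra structural hypothesis on $G'$ beyond feasibility is needed, so the bound holds for every feasible solution, and hence taking the maximum over $V' \subseteq V_c$ yields the stated $s_c$ lower bound.
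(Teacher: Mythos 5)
Your proof is correct and follows essentially the same argument as the paper: both fix a maximizing $V' \subseteq V_c$ and inject the non-singleton vertices of $V'$ into $N_{C \setminus \{c\}}(V')$ by choosing an arbitrary $G'$-neighbor, with colorfulness guaranteeing both that the witness avoids color $c$ and that the map is injective (distinct vertices of $V'$ lie in distinct components). Nothing further is needed.
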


\begin{proof}
Let $G' = (V,E')$, where $E' \subseteq E$, be a feasible solution for $G$. Fix a color $c$ for which $s_c > 0$ and let $V' \subseteq V_c$ be the subset maximizing the value of $s_c$. (Notice that $s_c$ depends only on the graph $G$, and not on $G'$.) For each vertex $v' \in V'$ which is not a singleton in $G'$ we pick an arbitrary neighbor $n(v')$ in $G'$. We have $n(v') \in N_{C \setminus \{c\}}(V')$. As any two vertices from $V'$ belong to different connected components in $G'$, the vertices $n(v')$ are pairwise different. The number of vertices of $V'$ which are not singletons in $G'$ is therefore at most $|N_{C \setminus \{c\}}(V')|$. The number of singletons amongst vertices from $V'$, and therefore also the number of singletons of color $c$, is therefore at least  $|V'| -|N_{C \setminus \{c\}}(V')| = s_c$.  \qed
\end{proof}

\begin{corollary}\label{cor:MSV_lb}
In any feasible solution for the MSV problem there are at least $\sum_{c \in C}\ s_c$ singleton vertices.
\end{corollary}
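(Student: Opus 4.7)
The plan is to derive the corollary as an almost immediate consequence of Lemma~\ref{lem:MSV_lb}, by summing the per-color lower bound over all colors $c \in C$ and appealing to the fact that vertices of different colors are, of course, distinct.

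More precisely, I would fix an arbitrary feasible solution $G' = (V,E')$ and, for each color $c \in C$, let $S_c$ denote the set of singleton vertices of color $c$ in $G'$. Lemma~\ref{lem:MSV_lb} applied to the color $c$ gives $|S_c| \ge s_c$. Since the color classes $V_c$ partition $V$, the sets $S_c$ are pairwise disjoint, so the total number of singleton vertices in $G'$ equals $\sum_{c \in C} |S_c|$, which is at least $\sum_{c \in C} s_c$.

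I do not foresee any real obstacle: the only thing to be careful about is that the bound of Lemma~\ref{lem:MSV_lb} is stated per color and counts only singletons \emph{of that color}, which is exactly what makes the disjointness argument trivial and legitimizes the summation. In particular, colors $c$ with $s_c = 0$ contribute nothing and need not be handled separately, so no case distinction is required.
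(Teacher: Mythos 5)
Your proof is correct and is exactly the intended derivation: the paper states the corollary without proof precisely because it follows from Lemma~\ref{lem:MSV_lb} by summing over colors, with the disjointness of the per-color singleton sets $S_c$ justifying the summation, as you observe.
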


\subsection{Idea of the Algorithm}\label{subsec:msv_alg}

We now present an algorithm {\sc MSVexact} which finds an optimal solution for the MSV problem. The input of the algorithm consists of a simple, undirected graph $G=(V,E)$, together with a coloring $c: V \to C$. The algorithm maintains a feasible solution $G'=(V,E')$ for the MSV problem (i.e., $G'$ is a subgraph of the input graph $G$, and every connected component of $G'$ is a colorful component), starting with an empty graph $G'=(V,\emptyset)$. In each step the graph $G'$ is modified by adding to it a carefully chosen alternating path $p$. The alternating path consists of the edges of $G$, and its every second edge is in $G'$. Applying $p$ to $G'$ means that the edges from $p$ which are not in $G'$ are added to $G'$, and at the same time the edges of $p$ which are in $G'$ are removed from $G'$.
See Algorithm \ref{alg:general} for the formal description of the algorithm.

The path $p$ is chosen in such a way, that adding it to $G'$ decreases the number of singleton vertices of color $c$, without increasing the number of singleton vertices of other colors. Additionally, at each step of the algorithm the graph $G'$ satisfies an invariant, that each connected component of $G'$ is a singleton vertex, an edge, or a star (where a star is a tree of diameter $2$, in particular it has at least $3$ vertices).

\begin{algorithm}[t]
\DontPrintSemicolon
\KwIn{A simple, undirected graph $G=(V,E)$, a coloring $c:V \to C$}
\KwOut{A subgraph of $G$ minimizing the number of connected components, and in which each connected component is colorful}
$G' := (V, \emptyset)$\;
\ForEach{$c \in C$} {
 \While{p={\sc Alternating\_Path(G',C,G,c)} \textrm{is a path}} {
 \textrm{apply $p$ to $G'$}
 }
}
\caption{{\sc MSVexact(G,c)}}\label{alg:general}
\end{algorithm}

We will show that when the algorithm stops, i.e., when it does not find any alternating path $p$ which can be added to $G'$ to decrease the number of singletons of any color, the number of singleton vertices in $G'$ matches the lower bound from Corollary~\ref{cor:MSV_lb}.

\subsection{Finding an Alternating Path}\label{subsec:msv_alternating_path}
\SetAlgorithmName{Procedure}{procedure}{List of Procedures}

Let $G'=(V,E')$ be a feasible solution for an instance $(G=(V,E), C)$ of the MSV problem, such that each connected component of $G'$ is a singleton vertex, an edge, or a star. Let $c \in C$ be an arbitrary color, and let $S_c \subseteq V$ be the set of all singletons of color $c$ in $G'$. We  describe a procedure {\sc Alternating\_Path(G,C,G',c)} which outputs an alternating path $p$ for $G'$ in $G$. In the following section we prove that the path $p$ satisfies the properties outlined in Section \ref{subsec:msv_alg}, and that when no path is found, the number of singletons of color $c$ in $G'$ matches the lower bound from Lemma~\ref{lem:MSV_lb}.

The idea behind the path construction is as follows. We want to find a path starting in some singleton vertex of color $c$, connecting each vertex of color $c$ with a vertex of color different than $c$ using an edge $e \in E \setminus E'$; and each vertex of color different than $c$ with an vertex of color $c$ using an edge $e \in E'$. We end the construction of the path when the current endpoint $v \notin V_c$ of the path belongs to a connected component of $G'$ to which we can attach an additional vertex of color $c$ (possibly while splitting the component into two parts). Such a case occurs when $v$ is a leaf of a star (which will result in removing $v$ from the star-component and connecting it with the vertex of color $c$), or when the connected component of $v$ does not contain color $c$. Then applying the alternating path to the graph $G'$ results in ``switching'' vertices of color $c$ between different connected components of $G'$, and removing one singleton of color $c$, as the start point of the path will not be a singleton in the new graph. The algorithm performs a BFS
search of the path satisfying the required conditions, starting with the collection of all singleton vertices of color $c$. See Procedure \ref{alg:alternating_path} for a formal description of the procedure.

\begin{algorithm}[t]
\DontPrintSemicolon
\KwIn{A simple, undirected graph $G=(V,E)$, a coloring $c:V \to C$, a feasible subgraph $G'=(V,E')$ of $G$, and a color $c \in C$}
\KwOut{A path $p$ or {\sc no\_path\_found}} 
$V' := S_c$\;
$N' := N_{C \setminus \{c\}}(V')$\tcp*[r]{Neighbors in $G$}
$\forall v \in N' \textrm{ pred}(v) := \textrm{ any }v' \in S_c \textrm{ s.t. }(v,v') \in E$\;
\While{$|N'| > 0$} {
  \If{$\exists v \in N' : v $ is a leaf of a star in $G'$} {
    $p:=${\sc{Path\_From(v)}}\;
    \Return $p \cup (v,v')$ s.t. $(v,v') \in E'$\;
  }
  \If{$\exists v \in N' : $ the connected component of $v$ in $G'$ has no color $c$} {
    $p:=${\sc{Path\_From(v)}}\;
    \Return $p$\;
  }
  $V'':=\{v'' \in V_c: \exists v \in N'$ s.t. $(v,v'') \in E'\}$\;
  $\forall v'' \in V'' \textrm{pred}(v'') := \textrm{any }v \in N' \textrm{ s.t. }(v,v'') \in E'$\;
  $V' := V' \cup V''$\;
  $N' := N_{C \setminus \{c\}}(V') \setminus N_{C \setminus \{c\}}(V' \setminus V'')$\;
  $\forall v \in N' \textrm{ pred}(v) := \textrm{any }v' \in V'' \textrm{ s.t. }(v,v') \in E$\;
}
\Return {\sc no\_path\_found}
\caption{{\sc Alternating\_Path(G,C,G',c)}}\label{alg:alternating_path}
\end{algorithm}

\begin{algorithm}[t]
\DontPrintSemicolon
\KwIn{A vertex $v \in V$}
\KwOut{A path starting in $S_c$ and ending in $v$}
\If {pred($v$) $\in S_c$}{
 \Return (pred(v),v)
}
\Return {\sc Path\_From}(pred(v)) $\cup$ (pred(v),v)
\caption{{\sc Path\_From(v)}}\label{alg:path_from}
\end{algorithm}


Procedure {\sc Alternating\_Path} constructs the path $p$ as follows. It keeps a set of vertices $V'$ of color $c$, initially setting $V':=S_c$ (line $1$). For each element $v \notin S_c$ considered by the procedure, its \emph{predecessor} pred($v$) is fixed (line $3,14,17$). Intuitively pred($v$) is an element such that $(\textrm{pred}(v),v) \in E$, and processing pred($v$) by the procedure resulted in adding $v$ to one of the sets $V',N'$. Procedure {\sc Path\_From(v)}, invoked in lines $6$ and $10$, can then reconstruct the whole path, starting from the final vertex $v$ and finding the predecessors until it reaches a vertex from $S_c$ (see Procedure \ref{alg:path_from} for a formal description).

Each loop of the algorithm (lines $4$ -- $18$) considers the set $N'$ of new neighbors of the vertices from $V'$ (i.e., the neighbors of $V'$ which have not been considered in the previous loops), see lines $2$ and $16$, in search for vertices which can yield an end of the path (see lines $5,9$). If no such vertex is found, the set $V'$ will be further increased to include the neighbors of $N'$ of color $c$ (line $13,15$). The process continues until an appropriate vertex $v$ is found in $N'$ (lines $5,9$), and then the algorithm returns the path reconstructed from $v$, or the set $N'$ becomes empty, in which case the answer {\sc no\_path\_found} is returned (line $19$). 

\subsection{Analysis}\label{subsec:msv_analysis}

\begin{lemma}\label{lem:msv_analysis_stop}
When the procedure {\sc Alternating\_Path}$(G,C,G',c)$ invoked for a graph $G'$ which is a feasible solution for MSV for $G=(V,E)$, and s.t. each connected component of $G'$ is a singleton, an edge or a star returns {\sc no\_path\_found}, then $|S_c| = s_c$.
\end{lemma}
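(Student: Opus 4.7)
The plan is to use the set $V'$ constructed by the procedure at the moment it returns {\sc no\_path\_found} as a certificate $V^* \subseteq V_c$ achieving $|V^*| - |N_{C \setminus \{c\}}(V^*)| = |S_c|$. Combined with Lemma~\ref{lem:MSV_lb}, which gives $s_c \leq |S_c|$ in every feasible solution, and with the definition $s_c = \max_{V' \subseteq V_c}(|V'| - |N_{C \setminus \{c\}}(V')|) \geq |V^*| - |N_{C \setminus \{c\}}(V^*)|$, this yields $|S_c| = s_c$.

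Let me index the iterations of the while-loop by $i = 1, \dots, k-1$, with $V'_i, N'_i, V''_i$ denoting the values of the variables during iteration $i$. Termination with {\sc no\_path\_found} means $N'_k = \emptyset$. First I would verify some bookkeeping facts by induction on $i$: the sets $N'_1, \dots, N'_{k-1}$ are pairwise disjoint (by the very definition $N'_{i+1} = N_{C \setminus \{c\}}(V'_{i+1}) \setminus N_{C \setminus \{c\}}(V'_i)$), their union equals $N_{C \setminus \{c\}}(V'_k)$, and $V'_k \setminus S_c = V''_1 \sqcup \dots \sqcup V''_{k-1}$. For the latter, one observes that any $v'' \in V''_i$ is a color-$c$ vertex with a $G'$-neighbor, hence $v'' \notin S_c$; and if $v'' \in V''_j$ for $j < i$, then $v''$'s unique $G'$-neighbor already lies in $N'_j$ (see the next step), so $v''$ cannot appear in $V''_i$.

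The key step is to prove $|V''_i| = |N'_i|$ for each $i \in \{1, \dots, k-1\}$ via the natural map $v \mapsto v''$ sending $v \in N'_i$ to its color-$c$ neighbor $v''$ in $G'$. This is where I use the structure enforced by the if-conditions: when iteration $i$ reaches line 13 without returning, no $v \in N'_i$ is a leaf of a star and no $v \in N'_i$ lies in a component avoiding color $c$. Hence each $v \in N'_i$ is either one endpoint of an edge-component whose other endpoint has color $c$, or the center of a star whose (unique, by colorfulness) color-$c$ vertex is a leaf. In either case $v$ has exactly one color-$c$ neighbor $v''$ in $G'$, and dually this $v''$ has $v$ as its only $G'$-neighbor. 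The map is therefore well-defined and injective into $V''_i$, and it is surjective by the definition of $V''_i$, so $|V''_i| = |N'_i|$.

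Putting the pieces together:
\begin{equation*}
|V'_k| - |N_{C\setminus\{c\}}(V'_k)| \;=\; |S_c| + \sum_{i=1}^{k-1}|V''_i| - \sum_{i=1}^{k-1}|N'_i| \;=\; |S_c|,
\end{equation*}
which establishes $s_c \geq |S_c|$ and, with Lemma~\ref{lem:MSV_lb}, finishes the proof. The main obstacle is the structural case analysis that underlies the bijection between $V''_i$ and $N'_i$; once one has carefully enumerated the possible roles of $v$ (leaf, center of star, endpoint of edge, singleton) in its $G'$-component and discarded those that would have caused an earlier return, the identity $|V''_i| = |N'_i|$ and the resulting telescoping are straightforward.
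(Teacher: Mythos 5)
Your proof is correct and takes essentially the same approach as the paper: both use the final set $V'$ built by the procedure as a certificate with $|V'| - |N_{C\setminus\{c\}}(V')| \ge |S_c|$, justified by the identical case analysis (every surviving neighbor is an edge-endpoint or star-center, hence has a unique color-$c$ neighbor in $G'$), and then conclude via Lemma~\ref{lem:MSV_lb}. The only difference is bookkeeping: the paper uses one global injection $v \mapsto n(v)$ from $N_{C\setminus\{c\}}(V')$ into $V' \setminus S_c$, whereas you telescope per-iteration bijections $|V''_i| = |N'_i|$, which yields exact equality rather than just the needed inequality.
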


\begin{lemma}\label{lem:msv_analysis_graph}
Let $G'=(V,E')$ be a feasible solution for MSV for $G=(V,E)$, s.t. each connected component of $G'$ is a singleton, an edge or a star. Let $p$ be a path returned by {\sc Alternating\_Path}$(G,C,G',c)$ for some color $c$, and let $G''$ be the result of applying $p$ on $G'$. Then:
\begin{enumerate}[a)]
\item $p$ is an alternating path for $G'$ in $G$,
\item the number of singleton vertices of color $c$ in $G''$ is smaller than in $G'$; the number of singleton vertices of any other color does not increase,
\item each connected component of $G''$ is colorful,
\item each connected component of $G''$ is a singleton, an edge or a star.
\end{enumerate}
\end{lemma}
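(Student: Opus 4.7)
My plan is to analyse applying the path $p = v_0, v_1, \ldots, v_\ell$ (with $v_0\in S_c$) level by level, using the notation $A_0 = \{v_0\}$ and, for $j\ge 1$, $A_j$ for the component of $G'$ containing $v_{2j-1}$. Throughout, two structural facts will be used. First, any intermediate $v_{2j-1}$ processed by the while-loop without triggering a return is not a leaf of any star in $G'$ (otherwise the first \texttt{if} of the procedure would have returned it), so in $A_j$ the vertex $v_{2j-1}$ is either the non-$c$ endpoint of an edge-component or the centre of a star; in either case $v_{2j}$ has a unique $E'$-neighbour, namely $v_{2j-1}$. Second, the refresh rule $N':= N_{C\setminus\{c\}}(V')\setminus N_{C\setminus\{c\}}(V'\setminus V'')$ forces each fresh $v_{2j+1}$ to lie in a component $A_{j+1}$ not visited before.

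For part (a), I prove alternation by induction along the BFS levels. The base $f_1 = (v_0,v_1)\in E\setminus E'$ holds because $v_0$ is a singleton. At level $j\ge 1$, the edge $f_{2j}=(v_{2j-1},v_{2j})$ lies in $E'$ by the defining relation for $V''$, and $f_{2j+1}=(v_{2j},v_{2j+1})$ lies in $E\setminus E'$ by the first structural fact. Both termination branches (Case~1 appending an $E'$-edge to the star centre; Case~2 stopping on an $E\setminus E'$ edge) are consistent with this alternation. For part (b), only vertices on $p$ have their $E'$-degree touched; each internal vertex gains and loses exactly one $E'$-edge, so its degree is preserved. The start $v_0$ gains an $E'$-edge and so ceases to be a singleton. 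At the terminal vertex either $v_\ell$ (Case~1) loses one $E'$-edge but retains at least one (since it is the centre of a star with at least two leaves), or $v_\ell$ (Case~2) gains an $E'$-edge. Hence no new singletons arise, and the number of colour-$c$ singletons drops by exactly one.

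For parts (c) and (d) I will identify the components of $G''$ meeting $p$ explicitly. For the range $j = 0, \ldots, m-1$ in Case~2 and $j = 0, \ldots, m-2$ in Case~1, the $G''$-component of $v_{2j}$ is $D_j := \{v_{2j}\}\cup (A_{j+1}\setminus \{v_{2j+2}\})$. In Case~2 the last component is $\{v_{2m}\}\cup B$, where $B$ is the component of $v_\ell$ in $G'$. In Case~1 the terminal pair consists of the edge $\{v_{\ell-2},v_{\ell-1}\}$ and the reduced star $A_m\setminus\{v_{\ell-1}\}$, centred at the original centre $v_\ell$. Colourfulness (c) follows because $v_{2j+2}$ is the unique colour-$c$ vertex of $A_{j+1}$, so $A_{j+1}\setminus\{v_{2j+2}\}$ contains no colour $c$ and attaching $v_{2j}$ (of colour $c$) introduces no duplicate; the Case~2 terminal component is colourful by the same argument applied to $B$, which contains no colour $c$ by the defining condition. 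For invariant (d), $v_{2j+1}$ is the centre of $A_{j+1}$ or the non-$c$ end of an edge-component, so removing the leaf $v_{2j+2}$ and attaching $v_{2j}$ as a new leaf produces again an edge or a star centred at $v_{2j+1}$; the Case~1 terminal edge is trivially an edge, and $A_m\setminus\{v_{\ell-1}\}$ is a star or an edge.

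The main subtlety will be the Case~1 endpoint: since $v_{\ell-1}$ is only a leaf of $A_m$, excising the edge $(v_{\ell-1},v_\ell)$ disconnects $v_{\ell-1}$ from the rest of $A_m$, and the naive formula $\{v_{\ell-2}\}\cup(A_m\setminus\{v_\ell\})$ fails to describe a single component; it must therefore be split into the edge $\{v_{\ell-2},v_{\ell-1}\}$ and the reduced star $A_m\setminus\{v_{\ell-1}\}$ and handled separately from the uniform $D_j$ description. Once this case is treated on its own, all four properties follow from the local analysis above.
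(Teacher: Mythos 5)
Parts (a) and (b) of your argument track the paper's proof closely (alternation by induction along the predecessor chain, degree bookkeeping at the two endpoints) and are essentially sound. The genuine gap is your second ``structural fact'', on which the explicit decompositions for (c) and (d) rest: it is \emph{false} that the refresh rule $N' := N_{C\setminus\{c\}}(V')\setminus N_{C\setminus\{c\}}(V'\setminus V'')$ forces each fresh $v_{2j+1}$ to lie in a component not visited before. The rule only guarantees that no \emph{vertex} enters $N'$ twice (everything adjacent to the old $V'$ is excluded); it does not stop a later level from re-entering an already traversed \emph{component} through a different leaf. Concretely: take colors $\{c,2,3\}$, let $G'$ consist of a singleton $s$ of color $c$ and a star with center $z$ (color $2$) and leaves $y$ (color $c$) and $u$ (color $3$), and let $E = \{(z,y),(z,u),(s,z),(y,u)\}$ with $E'=\{(z,y),(z,u)\}$. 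The procedure computes $N'=\{z\}$ ($z$ is a center, its component contains color $c$, so no return), expands to $V''=\{y\}$, obtains fresh $N'=\{u\}$ (since $u$ is adjacent to $y$ but not to $s$), and then returns in line $7$ the walk $s,z,y,u$ with the appended edge $(u,z)$. Here $A_2=A_1$, the returned object is not even a simple path (it revisits $z$), and your formula $D_0=\{s\}\cup(A_1\setminus\{y\})=\{s,z,u\}$ is not a component of $G''$: the true components are the edges $\{s,z\}$ and $\{y,u\}$. Your Case-1 accounting in (b) also misfires on this instance --- $v_\ell=z$ retains \emph{none} of its original $E'$-edges (it loses both $(z,y)$ and $(z,u)$) and avoids becoming a singleton only because it simultaneously gains the new edge $(s,z)$.

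Note that the conclusions of the lemma do hold on such instances, and the paper's proof survives precisely because it never asserts a global decomposition: it argues locally, per newly added edge $(u,v)\in E\setminus E'$, that $u$ has degree at most one in $G'$ and becomes a leaf of its new component, while the component receiving it either loses a color-$c$ vertex (intermediate case), splits off as the edge $\{u,v\}$ (line-$7$ case), or contained no color $c$ at all (line-$11$ case); losing \emph{additional} vertices along the way can only help colorfulness and the singleton/edge/star shape, so the local analysis is insensitive to whether the walk revisits a component. To salvage your approach you would need the correct weaker statement --- the components of the non-returning processed vertices $v_1,v_3,\dots,v_{2m-3}$ are pairwise distinct, and only the Case-1 terminal star can coincide with one of them, in which case $v_\ell$ coincides with an earlier odd vertex of the walk --- and then treat that self-intersecting case with its own decomposition, or else fall back on the paper's per-edge local argument.
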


We now have all tools to prove the main theorem of this section.

\begin{theorem}
The algorithm {\sc MSVexact(G,c)} finds an optimal solution for the MSV problem in polynomial time.
\end{theorem}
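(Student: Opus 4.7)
The plan is to combine Corollary \ref{cor:MSV_lb} with Lemmas \ref{lem:msv_analysis_stop} and \ref{lem:msv_analysis_graph} to establish correctness, and then to bound the total work by counting how many alternating paths can be applied in the course of the algorithm. The overall strategy is to maintain two invariants throughout Algorithm \ref{alg:general}: (i) $G'$ is a feasible solution whose every connected component is a singleton, an edge or a star; and (ii) once the outer loop finishes processing a color $c$, the number of singletons of color $c$ equals $s_c$ and never increases thereafter.

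For correctness I would argue by induction on the iterations of the inner while loop. At initialization $G' = (V, \emptyset)$ trivially satisfies invariant (i), and each subsequent iteration preserves it by parts (a), (c) and (d) of Lemma \ref{lem:msv_analysis_graph}, so {\sc Alternating\_Path} is always invoked on the kind of graph its own analysis assumes. By part (b) of the same lemma, each application of the returned path strictly decreases the number of singletons of the current color while leaving the singleton counts of all other colors non-increasing. When the inner loop for color $c$ exits, {\sc Alternating\_Path} has returned {\sc no\_path\_found}, so Lemma \ref{lem:msv_analysis_stop} gives $|S_c| = s_c$; the non-increasing property from Lemma \ref{lem:msv_analysis_graph}(b) combined with the lower bound of Lemma \ref{lem:MSV_lb} then forces $|S_c|$ to stay equal to $s_c$ for the rest of the execution. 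Hence when Algorithm \ref{alg:general} terminates the number of singleton vertices is $\sum_{c \in C} s_c$, which by Corollary \ref{cor:MSV_lb} is a lower bound for every feasible solution, so $G'$ is optimal.

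For the running time I would observe that each inner iteration removes at least one singleton from $G'$ and that the initial singleton count is $|V|$, so the total number of times {\sc Alternating\_Path} returns a path is at most $|V|$. Each call performs a BFS-style exploration in which every vertex enters $V'$ or $N'$ at most once; the bookkeeping for predecessors, for testing leaves of stars, and for checking whether a component avoids color $c$ is polynomial in $|V|+|E|$, as are the subsequent path reconstruction by {\sc Path\_From} and the actual application of the path to $G'$. Multiplying the $O(|V|)$ bound on iterations by the per-iteration polynomial cost yields an overall polynomial running time.

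The main obstacle in this plan is the interaction between the outer loop over colors and the per-color invariant: a priori, processing a later color $c'$ could undo the work done for an earlier color $c$ by turning some non-singleton of color $c$ back into a singleton. The plan hinges on the fact that Lemma \ref{lem:msv_analysis_graph}(b) explicitly forbids this, which is precisely the clause that lets the per-color optimum $s_c$ be locked in once and for all. The genuinely technical content therefore sits inside Lemmas \ref{lem:msv_analysis_stop} and \ref{lem:msv_analysis_graph}, which are assumed here; the theorem itself is essentially a clean assembly of those results.
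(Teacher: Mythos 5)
Your proposal is correct and follows essentially the same route as the paper's proof: initialize with the all-singleton feasible graph, use Lemma~\ref{lem:msv_analysis_graph}(a,c,d) to maintain the feasibility and singleton/edge/star invariant, use part~(b) together with Lemma~\ref{lem:msv_analysis_stop} and Corollary~\ref{cor:MSV_lb} to match the lower bound, and bound the number of path applications by $O(|V|)$ with polynomial work per call. In fact you are slightly more explicit than the paper on one point it glosses over --- that processing later colors cannot disturb the count $|S_c|=s_c$ already achieved for an earlier color, which you correctly pin down via the non-increase clause of Lemma~\ref{lem:msv_analysis_graph}(b) combined with the lower bound of Lemma~\ref{lem:MSV_lb}.
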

\begin{proof}
The algorithm {\sc MSVexact(G,c)} starts by choosing a feasible solution $G'=(V,\emptyset)$ for the problem, in which every connected component is a singleton. Lemma~\ref{lem:msv_analysis_graph} implies that after each step of executing the procedure {\sc Alternating\_Path}$(G,C,G',c)$, the new graph $G'$ obtained is a feasible solution (Lemma \ref{lem:msv_analysis_graph}$c$) which is a collection of singletons, edges and stars (Lemma \ref{lem:msv_analysis_graph}$d$). As in each step where finding an alternating path has been successful the number of singleton vertices of the currently processed color $c' \in C$ decreases, and for other colors does not increase (Lemma~\ref{lem:msv_analysis_graph}$b$), after $O(|V|)$ steps the algorithm does not find any more alternating paths. Thus, as each color $c' \in C$ has been processed by the algorithm, from Lemma~\ref{lem:msv_analysis_stop} for each color $c' \in C$ the number of singleton vertices of color $c'$ equals $s_{c'}$ and the resulting graph $G'$ is  an optimal solution to the MSV problem (see Corollary \ref{cor:MSV_lb}). As each execution of the procedure {\sc Alternating\_Path} takes polynomial time (as in each loop, possibly except of the last one, the set $V'$ grows), the running time of the algorithm {\sc MSVexact(G,c)} polynomial in the size of the input graph $G$. \qed
\end{proof}

\section{Hardness of MEC}
\label{MEC}
In this section we prove the $ NP$-hardness and the $ APX$-hardness of the MEC problem, for $|C| \ge 4$. We show our result via a reduction from MAX-3SAT($\beta$), a version of the MAX-3SAT problem where each variable appears at most $\beta$ times in the formula. For $\beta=3$ the problem is $APX$-hard (see~\cite{Ausiello1999}, Section 8.4).

\subsection{Reduction from MAX-3SAT($\beta$)}

Given an instance of the MAX-3SAT($\beta$) problem, i.e., a 3-CNF formula $\phi$ with $m$ clauses and $n$ variables, where each variable appears at most $\beta$ times, we  construct an instance of the MEC problem. Our instance is a vertex colored graph $G=(V,E)$, where the vertices are colored with colors from a four-element set $\{a,b,c,v\}$. An example of the reduction is illustrated in Figure~\ref{fig:mec-reduction}. 

\begin{figure}[t]
\begin{center}
\includegraphics[scale=1]{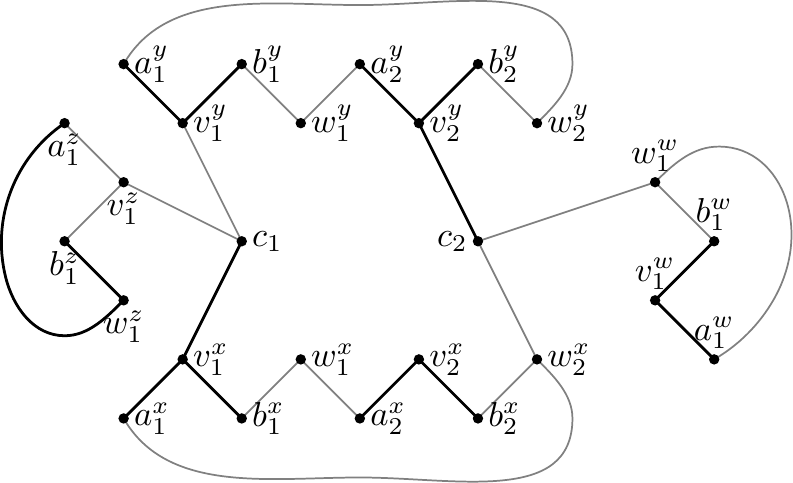}
\end{center}
\caption{An instance $G$ of the MEC problem corresponding to the 3SAT formula $(x \lor y \lor z) \land (\neg x \lor y \lor \neg w)$ (both black and gray edges). The subgraph $G''$ consisting of all vertices and only black edges represents a solution for $G$ corresponding to the following assignment: $f(x)=f(y)=f(w)=\textrm{TRUE}$, $f(z)=\textrm{FALSE}$.}
\label{fig:mec-reduction}
\end{figure}

\vskip 5pt
\noindent First we describe the set of vertices $V$.

\begin{enumerate}
\item We add to $V$ a set of vertices $c_1, \ldots, c_m$, each colored with color $c$, where vertex $c_i$ corresponds to the $i$-th clause of the formula.
\item For a variable $x$, let $n_x$ be the number of occurrences of the literals $x$ and $\neg x$ in the formula. For each variable $x$, we add to $V$: $n_x$ vertices of color $a$ (denoted by $a^x_1, a^x_2, \dots, a^x_{n_x}$), $n_x$ vertices of color $b$ (denoted by $b^x_1, b^x_2, \dots, b^x_{n_x}$), and $2n_x$ vertices of color $v$ (denoted by $v^x_1, v^x_2, \dots, v^x_{n_x}$ and $w^x_1, w^x_2, \dots, w^x_{n_x}$). Intuitively, the vertices $v^x_i$ are associated with $x$, and the vertices $w^x_i$ with $\neg x$.
\end{enumerate}

\noindent We now show how to construct the set of edges $E$.

\begin{enumerate}
\item For each variable $x$, we construct a cycle of length $4n_x$ by adding to $E$ the collection of edges $(a^x_i, v^x_i)$, $(v^x_i, b^x_i)$, $(b^x_i, w^x_i)$ and $(w^x_i, a^x_{(i\ \textrm{mod}\ n_x) + 1})$ for $i=1,..,n_x$. 
\item For each clause we add to $E$ three edges, where each edge connects the vertex $c_i$ representing the clause with a vertex representing one literal of $c_i$. More formally, if a literal $x$ ($\neg x$) occurs in the $i$-th clause, we add to $E$ an edge connecting $c_i$ with some vertex  $v^x_j$ ($w^x_j$, respectively). We do this operation in such a way, that each vertex $v^x_j$ and $w^x_j$ representing a literal is incident with at most one clause-vertex $c_i$. Notice that since we have more vertices $v^x_j$ and $w^x_j$ than actual literals, some of the vertices $v^x_j$ and $w^x_j$ will not be connected with any clause-vertex $c_i$.
\end{enumerate}

\subsection{Analysis of the Reduction}

Let $\phi$ be a MAX-3SAT($\beta$) formula on $m$ clauses, and $G=(V,E)$ a vertex-colored graph obtained from $\phi$ by our reduction. Let $G'=(V,E')$ be a subgraph of $G$ which is an optimal solution for the MEC problem on $G$.

\begin{lemma}\label{lem-mec-1}
If the formula $\phi$ is satisfiable, then the transitive closure of $G'$ has at least $12m$ edges.\footnote{It can be proven that in this case the transitive closure of $G'$ has exactly $12m$ edges, but that is not needed in the later part of the reasoning.}
\end{lemma}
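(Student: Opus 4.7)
The plan is to exhibit, for any satisfying assignment $f$ of $\phi$, an explicit feasible subgraph $G''=(V,E'')$ of $G$ whose transitive closure already contains $12m$ edges; since $G'$ is optimal for MEC, the bound on $G'$ then follows immediately.

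To construct $G''$, I would handle each variable cycle according to $f$. For a variable $x$ with $f(x)=\textrm{TRUE}$ I would keep exactly the edges $(a^x_i,v^x_i)$ and $(v^x_i,b^x_i)$ for $i=1,\dots,n_x$, so the cycle breaks into $n_x$ disjoint colorful paths $a^x_i$--$v^x_i$--$b^x_i$ (colors $a,v,b$) together with $n_x$ isolated vertices $w^x_i$. For $f(x)=\textrm{FALSE}$ I would symmetrically keep $(b^x_i,w^x_i)$ and $(w^x_i,a^x_{(i \bmod n_x)+1})$, producing $n_x$ colorful paths through the $w^x_i$ and isolating the $v^x_i$. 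Then, because $f$ satisfies every clause, for each $c_i$ I pick one true literal appearing in $c_i$ and add the unique clause edge from $c_i$ to the corresponding literal vertex (a $v^x_j$ or $w^x_j$ that I have kept internal to a $3$-path). Since by construction of $G$ each literal vertex is adjacent to at most one clause vertex, no two clause vertices are attached to the same $3$-path, so each clause vertex merges with a distinct $3$-path into a $4$-vertex component with colors $\{a,v,b,c\}$, which is colorful.

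Next I would verify that $G''$ satisfies the colorful-components condition: the components are (i) $m$ paths of four vertices $c_i$--(literal)--(literal)--(literal) with all four colors distinct, (ii) isolated $3$-paths of colors $a,v,b$, and (iii) isolated vertices. All components are colorful, so $G''$ is feasible. Then I count transitive-closure edges. Each of the $m$ four-vertex components contributes $\binom{4}{2}=6$ edges, for $6m$ in total. The remaining $3$-paths come from the $\sum_x n_x = 3m$ paths produced in the variable step, minus the $m$ paths absorbed into clause components, leaving $2m$ isolated $3$-paths; each contributes $\binom{3}{2}=3$ edges, totalling $6m$. Singletons contribute $0$. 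Summing gives $6m+6m=12m$ edges in the transitive closure of $G''$.

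Finally, since $G'$ is by assumption an optimal MEC solution, the transitive closure of $G'$ has at least as many edges as that of $G''$, which gives the claimed $12m$ lower bound. I do not expect any real obstacle here: the only point that requires a little care is the choice made in the reduction that each literal vertex has at most one clause neighbor, which is precisely what guarantees that the $m$ clause-augmented components stay disjoint and yield four-vertex colorful components rather than colliding and either violating colorfulness or double-counting paths.
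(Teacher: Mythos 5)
Your proposal is correct and follows essentially the same route as the paper's own proof: both construct an explicit feasible subgraph $G''$ from a satisfying assignment (keeping, for each variable, the two cycle edges around each literal vertex on the satisfied side, and attaching each clause vertex to one true literal vertex), yielding $m$ colorful four-vertex components, $2m$ three-vertex paths, and $3m$ singletons, for $6m+6m=12m$ transitive-closure edges, and then both invoke the optimality of $G'$. Your explicit case split on $f(x)$ and the remark about each literal vertex having at most one clause neighbor merely spell out details the paper leaves implicit.
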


\begin{lemma}\label{lem-mec-2}
If any assignment can satisfy at most a $(1-\epsilon)$ fraction of the $m$ clauses of the formula $\phi$, then the transitive closure of $G'$ has at most $12m-\Theta(\epsilon)m$ edges.
\end{lemma}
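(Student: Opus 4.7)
The plan is to upper-bound the transitive closure via a combination of colour-counting and structural analysis of the variable cycles, and then to extract an assignment from the optimal $G'$ whose satisfaction count controls the number $\alpha$ of size-$4$ components.

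First, I would observe that since $|C|=4$, every connected component of $G'$ has at most $4$ vertices; moreover, inspecting $G$ shows that a size-$4$ component must be a star of the form $\{c_i, v^x_j, a^x_j, b^x_j\}$ (or its analogue with $w^x_j$) centred at a $v$-coloured vertex that is adjacent to a clause. Letting $\alpha$ denote the number of such size-$4$ components, and using the colour counts of $G$ (namely $3m$ each of $a, b$; $m$ of $c$; $6m$ of $v$) as linear constraints on the component-type counts, and maximising $\mathrm{TC}=6\alpha+3\beta+\gamma+3 N_3 + N_2$, where $\beta, \gamma$ count size-$3$/size-$2$ components containing a clause vertex and $N_3, N_2$ their non-clause counterparts, I would obtain the relaxation bound $\mathrm{TC}(G') \leq \tfrac{21}{2}m + \tfrac{3}{2}\alpha$. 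This matches the $12m$ bound of Lemma~\ref{lem-mec-1} exactly when $\alpha=m$.

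Next, I would extract an assignment $\tau$ from $G'$ and bound $\alpha$. For each variable $x$, let $r_x^v$ (respectively $r_x^w$) count the size-$4$ components whose centre is some $v^x_j$ (resp.\ $w^x_j$), so that $\alpha=\sum_x (r_x^v+r_x^w)$; each such component certifies that the positive literal $x$ (resp.\ the negative literal $\neg x$) occurs in the associated clause. Setting $\tau(x)=T$ iff $r_x^v \geq r_x^w$, the assignment $\tau$ satisfies at least $\sum_x \max(r_x^v, r_x^w)$ of the clauses in size-$4$ components. The hypothesis that no assignment satisfies more than $(1-\epsilon)m$ clauses then gives
\[
\alpha \;\leq\; (1-\epsilon) m + \sum_x \min(r_x^v, r_x^w).
\]

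Finally, I would bound the mixing term $\sum_x \min(r_x^v, r_x^w)$ structurally. For each variable $x$, the conflict graph on the candidate centres $\{v^x_j, w^x_j\}_j$ is a cycle of length $2n_x$ (neighbouring centres share an $a$- or $b$-vertex), whose only maximum independent sets are the two pure-phase sets (all $v$s or all $w$s). Hence any variable with $\min(r_x^v, r_x^w)>0$ is forced into a strictly sub-maximum independent set and loses at least one size-$3$ cycle path; a local analysis of the freed cycle vertices (which cannot be packed back into a colourful size-$3$ component, as that would require two distinct $v$-coloured vertices) shows the net TC loss is at least a positive constant per unit of mixing. Combining this refined bound with the displayed inequality on $\alpha$ yields $\mathrm{TC}(G') \leq 12m - \Theta(\epsilon) m$. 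The main obstacle will be making the structural loss quantitative at the right rate: I need to show that the freed cycle vertices in a mixed configuration cannot be rescued into enough size-$2$ components or size-$3$ clause-extensions to undo the lost size-$3$ contribution, which requires a careful local calculation that interacts with the clause edges and avoids double-counting across adjacent mixed slots.
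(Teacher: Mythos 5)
Your setup is correct and, up to the last step, runs parallel to the paper's proof: the observation that every edge of $G$ has a $v$-coloured endpoint (so size-$4$ components are exactly the clause-centred stars you describe), the packing bound $\mathrm{TC}(G')\le \frac{21}{2}m+\frac{3}{2}\alpha$ (the paper derives precisely $\mathrm{OPT}\le 10.5m+1.5\alpha_4$), and the majority-extraction inequality $\alpha \le (1-\epsilon)m + M$ with $M=\sum_x \min(r_x^v,r_x^w)$ are all sound. The genuine gap is the quantitative rate in the final step, exactly where you flag the obstacle. Substituting your bound on $\alpha$ into a refined relaxation $\mathrm{TC}\le \frac{21}{2}m+\frac{3}{2}\alpha - cM$ gives $\mathrm{TC}\le 12m-\frac{3}{2}\epsilon m+(\frac{3}{2}-c)M$, so your stated combination closes only if the per-unit-of-mixing loss satisfies $c\ge 3/2$; but the true rate is at most $1$. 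Concretely, take a variable $x$ with $n_x=3$ and stars at $v^x_1$ and $w^x_2$: the leftover cycle vertices form the two paths $w^x_1 - a^x_2 - v^x_2$ and $v^x_3 - b^x_3 - w^x_3$, each with \emph{both} endpoints coloured $v$, hence each contributing transitive-closure value $1$ instead of $3$; the gadget yields $6+6+1+1=14$ versus $15=3\alpha^x+3n_x$ for a consistent variable with two attached clauses --- a deficit of exactly $1$ per unit of mixing. Globally, a feasible solution in which half the variables are mixed this way (absorbing all $m$ clause vertices) and the rest carry none has $\mathrm{TC}=11.5m$, $\alpha=m$, $M=m/2$, while $10.5m+1.5\alpha-1.5M=11.25m$; so no structural argument can deliver $c=3/2$, and with $c=1$ your substitution leaves a useless $+\frac{1}{2}M$ term with $M$ possibly $\Theta(m)$.

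The approach is repairable, but it needs a \emph{second}, independent inequality and a case split rather than a single substitution: the per-variable deficits also give $\mathrm{TC}=\sum_x \mathrm{TC}(G'_x)\le \sum_x(3\alpha^x+3n_x)-c'M\le 12m-c'M$ for some constant $c'>0$ (per mixed variable the loss is at least $1$, and $M_x\le n_x/2\le \beta/2$), and taking the minimum of $12m-c'M$ and $12m-\frac{3}{2}\epsilon m+\frac{3}{2}M$ over $M$ yields $\mathrm{TC}\le 12m-\Theta(\epsilon)m$ for \emph{any} positive constant $c'$. This is in essence how the paper sidesteps the issue: it argues in the contrapositive, assuming $\mathrm{TC}>12m-\epsilon m$, and extracts two separate consequences --- at least $(1-\epsilon)m$ size-$4$ components, and at most $\epsilon m$ ``inconsistent'' variables, where inconsistency is a \emph{binary} per-variable notion (clause attachments on both phases, in components of any size), so a loss of just $1$ per inconsistent variable suffices; the extracted assignment then satisfies at least $(1-(\beta+1)\epsilon)m$ clauses, the factor $\beta$ capping the clauses attached to any single inconsistent variable. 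Your direct route, once equipped with the two-bound minimum, would be a legitimate alternative, but as written the concluding combination rests on a loss rate that is provably unattainable.
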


\begin{theorem}
The Maximum Edges in the Transitive Closure problem is $ APX$-hard, even for graphs with only four colors.
\end{theorem}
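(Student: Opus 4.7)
The plan is to observe that the reduction above, together with Lemmas~\ref{lem-mec-1} and~\ref{lem-mec-2}, constitutes a gap-preserving reduction (in fact an $L$-reduction) from MAX-3SAT($\beta$) with $\beta=3$ to MEC, and then to invoke the $APX$-hardness of MAX-3SAT($3$) stated in~\cite{Ausiello1999}.

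First I would verify that the construction is linear-time and linear-size: since each variable occurs at most $\beta = 3$ times, the total number $\sum_x n_x$ of literal occurrences is at most $3m$, so the constructed graph $G$ has $O(m)$ vertices and $O(m)$ edges, and the reduction clearly runs in polynomial (in fact linear) time. This is the size bound needed for the gap to translate into inapproximability with a constant factor.

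Next I would combine the two lemmas. By the $APX$-hardness of MAX-3SAT($3$) there exists a fixed constant $\epsilon_0 > 0$ such that it is $NP$-hard to distinguish the following two cases for a MAX-3SAT($3$) formula $\phi$ on $m$ clauses: (i) $\phi$ is satisfiable, or (ii) every assignment satisfies at most $(1-\epsilon_0)m$ clauses. Lemma~\ref{lem-mec-1} shows that in case (i) the optimal MEC value on the constructed graph $G$ is at least $12m$, while Lemma~\ref{lem-mec-2} shows that in case (ii) the optimal MEC value is at most $12m - \delta\, m$ for some constant $\delta = \Theta(\epsilon_0) > 0$. Thus the optimal values of the two kinds of MEC instances differ by a factor of at least $\rho := 12/(12 - \delta) > 1$, a universal constant strictly greater than $1$.

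Finally I would conclude as usual: any polynomial-time $\alpha$-approximation for MEC with $\alpha < \rho$ would allow us to decide in polynomial time which of the two cases (i) or (ii) the formula $\phi$ falls into, contradicting $P \neq NP$. In particular, no PTAS for MEC can exist unless $P = NP$, which is exactly $APX$-hardness. Since the colors used in the construction are only $\{a,b,c,v\}$, this hardness already holds for $|C| \geq 4$, as claimed. The only non-trivial work in the whole argument is hidden in Lemma~\ref{lem-mec-2} (showing that any good MEC solution can be decoded into a good assignment of $\phi$); once that is in hand, the theorem itself reduces to the routine gap-preservation bookkeeping above.
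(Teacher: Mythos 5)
Your proposal is correct and takes essentially the same route as the paper: the paper's proof is exactly this gap argument, combining Lemma~\ref{lem-mec-1} (satisfiable $\Rightarrow$ optimum at least $12m$) and Lemma~\ref{lem-mec-2} (at most a $(1-\epsilon)$ fraction satisfiable $\Rightarrow$ optimum at most $12m - \Theta(\epsilon)m$) with the $APX$-hardness of MAX-3SAT($\beta$) from~\cite{Ausiello1999}. Your write-up merely makes explicit the bookkeeping the paper leaves implicit (linear size of the construction, the resulting constant gap $\rho = 12/(12-\delta)$, and the four-color observation), all of which is consistent with the paper's argument.
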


\begin{proof}
Let $\phi$ be a MAX-3SAT($\beta$) formula on $m$ clauses, and $G=(V,E)$ a vertex-colored graph obtained from $\phi$ by our reduction. Let $G'=(V,E')$ be a subgraph of $G$ which is an optimal solution for the MEC problem on $G$. From Lemma \ref{lem-mec-1} we know, that if the formula $\phi$ is satisfiable, then the transitive closure of $G'$ has at least $12m$ edges. From Lemma \ref{lem-mec-2} we know, that if any assignment can satisfy at most a $(1-\epsilon)$ fraction of the $m$ clauses of $\phi$, then the transitive closure of $G'$ has at most $12m-\Theta(\epsilon)m$ edges.

As the $MAX-3SAT(\beta)$ problem is $ APX$-hard~\cite{Ausiello1999}, we obtain that MEC is also $ APX$-hard. \qed
\end{proof}


\section{Hardness of MCC}
\label{MCC}
In this section we prove that the MCC problem does not admit polynomial-time approximation within a factor of $|V|^{1/14 - \epsilon}$, for any $\epsilon > 0$, unless $ P = NP$, or within a factor of $|V|^{1/2 - \epsilon}$, unless $ ZPP = NP$. The results hold even if each vertex color appears at most two times in the input graph. We prove our results via a reduction from the Minimum Clique Partition problem.

\vskip5pt
\noindent \textbf{Minimum Clique Partition:} Given a simple, undirected graph $G=(V,E)$, find a partition of $V$ into a minimum number of subsets $V_1, \ldots, V_k$ such that the subgraph of $G$ induced by each set of vertices $V_i$ is a complete graph.
\vskip5pt

The Minimum Clique Partition problem is equivalent to Minimum Graph Coloring~\cite{PM81}, and therefore it cannot be approximated in polynomial time within a factor of $|V|^{1/7 - \epsilon}$ for any $\epsilon > 0$~\cite{BGS98}, unless $ P = NP$, or within a factor of $|V|^{1 - \epsilon}$, unless $ ZPP = NP$~\cite{Feige98}. 


\subsection{Reduction from Minimum Clique Partition}

Let $G = (V,E)$ be an instance of the Minimum Clique Partition problem. We create an instance of the MCC problem, i.e., a vertex colored graph $G' = (V',E')$, as follows. The reduction is illustrated in Figure~\ref{fig:mcc}.

\begin{figure}[t]
\begin{center}
\includegraphics[scale=1]{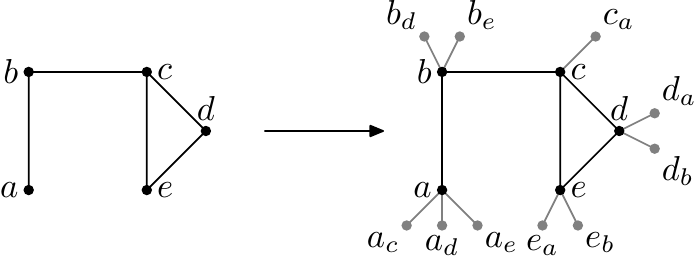}
\end{center}
\caption{Creating an instance of the MCC problem (right) from an instance of the
Minimum Clique Partition (left). Base vertices and edges are drawn in
black, and the additional ones in gray. An optimal solution for both
problems is obtained by removing an edge $(b,c)$.}
\label{fig:mcc}
\end{figure}

\begin{enumerate}
\item The vertex set $V' = V'_b \cup V'_a$ consists of two parts. Firstly, the set $V'_b=V$ is the set of all vertices in $G$, each colored with a distinct color. We term these vertices \emph{base vertices}. 
The set $V'_a$ has two vertices, $u_v$ and $v_u$, for each pair of vertices $u,v \in V$ such that $(u,v) \notin E$. Both vertices $u_v$ and $v_u$ have the same color, which is different from other colors in the graph. We refer to the vertices from $V'_a$ as \emph{additional vertices}.  We emphasize that each color appears \emph{at most} two times in $G'$.

\item The set of edges $E' = E'_b \cup E'_a$ consists of two parts. First, $E'_b = E$ is the set of edges in $G$, which we term \emph{base edges}.  
The set $E'_a$ has two edges, $(u_v,u)$ and $(v_u,v)$, for each pair of vertices $u,v \in V$ such that $(u,v) \notin E$ (i.e., each additional vertex $u_v$ is connected with a base vertex $u$). We refer to the edges from $E'_a$ as \emph{additional edges}. 
\end{enumerate}

\subsection{Analysis of the Reduction}

We first show that the cost of an optimal solution for an instance of the Minimum Clique Partition problem is the same as the cost of an optimal solution of an instance of the MCC problem obtained by the reduction.

\begin{lemma}\label{lem:mcc-red-1}
Let $G=(V,E)$ be an instance of the Minimum Clique Partition problem, and $G'=(V',E')$ the corresponding instance of the MCC problem, obtained by our reduction. If there is a partition of $G$ into $k$ cliques, then the optimal solution for the MCC problem for $G'$ has cost at most $k$.
\end{lemma}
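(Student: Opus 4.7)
The plan is to exhibit an explicit feasible solution for the MCC instance $G'$ that uses exactly $k$ connected components, built directly from the given clique partition $V_1,\ldots,V_k$ of $G$. Define a subgraph $G''=(V',E'')$ of $G'$ by taking $E'' = E''_b \cup E'_a$, where $E''_b \subseteq E'_b$ consists of precisely those base edges whose endpoints lie inside a common clique $V_i$, and $E'_a$ is the full set of additional edges. Intuitively, each clique becomes the ``spine'' of one component, to which every additional vertex is glued through its unique incidence to a base vertex.

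The first step is to count the connected components of $G''$. Since each $V_i$ induces a clique in $G$, the base vertices of $V_i$ are pairwise joined in $G''_b$, so they lie in a single component $C_i$. Every additional vertex $u_v \in V'_a$ has exactly one edge in $E'_a$, namely $(u_v,u)$, hence it attaches to the component of $u$, i.e.\ to $C_i$ where $u \in V_i$. Thus the components of $G''$ are exactly $C_1,\ldots,C_k$, giving $k$ components as required.

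The second step is to verify colorfulness of each $C_i$. Base vertices carry pairwise distinct colors by construction, so any two base vertices in $C_i$ are color-distinct. Additional-vertex colors are disjoint from base-vertex colors, so a base/additional conflict is impossible. The only remaining risk is two additional vertices sharing a color; by construction this happens exactly for pairs $\{u_v,v_u\}$ arising from a non-edge $(u,v)\notin E$. The key point is that in $G''$ we have $u_v \in C_i$ and $v_u \in C_j$, where $u \in V_i$ and $v \in V_j$. Because $V_i$ is a clique and $(u,v)\notin E$, we must have $i \neq j$, so the two same-colored additional vertices land in distinct components. Hence every $C_i$ is colorful, $G''$ is a feasible MCC solution, and an optimal one has cost $\le k$.

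The proof is essentially a direct construction with three routine verifications (feasibility as a subgraph, count of components, colorfulness). The only place a genuine argument is needed is the last bullet: using the clique property of the partition to conclude that the two halves of each additional-vertex pair are separated by the component structure. Everything else is bookkeeping.
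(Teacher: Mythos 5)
Your proof is correct and takes essentially the same approach as the paper: both keep all additional edges and exactly the intra-clique base edges (equivalently, remove the inter-clique base edges), count $k$ components using the degree-one attachment of additional vertices, and resolve the only possible color conflict, a pair $u_v,v_u$ with $(u,v)\notin E$, by observing that $u$ and $v$ cannot lie in the same clique. The only difference is presentational: you specify the kept edge set, while the paper specifies the removed one.
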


\begin{lemma}\label{lem:mcc-red-2}
Let $G=(V,E)$ be an instance of the Minimum Clique Partition problem, and $G'=(V',E')$ the corresponding instance of the MCC problem, obtained by our reduction. If the optimal solution for the MCC problem for $G'$ has cost $k$, then there exists a partition of $G$ into $k$ cliques.
\end{lemma}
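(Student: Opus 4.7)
The plan is to take an optimal MCC solution $G''\subseteq G'$ with $k$ colorful components and build from it a partition of the base vertex set $V=V'_b$ into at most $k$ cliques of $G$; if fewer than $k$ cliques appear, we split cliques into singletons to reach exactly $k$ parts.

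The first step is a structural observation. Every additional vertex $u_v\in V'_a$ has exactly one incident edge in $G'$, namely $(u_v,u)$, so in $G''$ it is either still attached to its base vertex $u$ or it is an isolated vertex. Consequently every connected component of $G''$ either contains at least one base vertex (I call these \emph{big} components) or it is a singleton consisting of a single additional vertex. I would index the components so that $C_1,\dots,C_{k_1}$ are the big ones and $C_{k_1+1},\dots,C_k$ are the additional-vertex singletons, and I would set $V_i=C_i\cap V$ for $1\le i\le k_1$; the sets $V_1,\dots,V_{k_1}$ then partition $V$.

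The main step I would prove is the following counting bound: for every non-edge $(u,v)\notin E$ with $u,v\in V_i$ for the same big component $C_i$, at least one of the same-colored additional vertices $u_v,v_u$ is isolated in $G''$. Indeed, they share the color assigned to the pair $\{u,v\}$, so by colorfulness at most one of them lies in $C_i$; the structural observation forces the other, whose unique neighbor in $G'$ lies in $C_i$, to be isolated in $G''$. Since distinct non-edges correspond to disjoint pairs $\{u_v,v_u\}$ of additional vertices, this yields
$$k_2\;\ge\;\sum_{i=1}^{k_1} e^-(V_i),$$
where $k_2:=k-k_1$ is the number of additional-vertex singleton components and $e^-(V_i)$ denotes the number of non-edges of $G$ with both endpoints in $V_i$.

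Finally, I would partition each $V_i$ into cliques of $G$ by choosing a minimum vertex cover $S_i$ of the subgraph of the complement $\bar G$ induced on $V_i$: then $V_i\setminus S_i$ is an independent set of $\bar G$ and hence a clique of $G$, and adding each $v\in S_i$ as a singleton clique gives a partition of $V_i$ into at most $1+|S_i|\le 1+e^-(V_i)$ cliques. Summing over $i$ produces a partition of $V$ into at most
$$\sum_{i=1}^{k_1}\bigl(1+e^-(V_i)\bigr)\;=\;k_1+\sum_{i=1}^{k_1} e^-(V_i)\;\le\;k_1+k_2\;=\;k$$
cliques, and we split any remaining clique of size at least two into singletons to reach exactly $k$ parts. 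I expect the main obstacle to be proving the key counting bound above; the delicate point is that it uses both the colorfulness of $C_i$ and the degree-one structure of additional vertices in $G'$ to charge every intra-component non-edge of $G$ to a distinct isolated additional vertex of $G''$.
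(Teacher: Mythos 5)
Your proof is correct, but it takes a genuinely different route from the paper's. The paper performs surgery on the optimal removal set $E''$: each removed additional edge $(u,u_v)$ is put back (merging two components, hence decreasing the count by one), and the unique resulting color conflict between $u_v$ and $v_u$ is repaired by adding a minimum $u$--$v$ edge cut inside that component to $E''$ (splitting it back into two colorful pieces), so the component count never goes above $k$; once $E''$ contains only base edges, the intact gadgets force every pair of base vertices sharing a component to be adjacent in $G$, and the components read off directly as at most $k$ cliques. You instead leave the optimal solution untouched and charge: every non-edge of $G$ with both endpoints in the same big component forces, by colorfulness and the degree-one structure of the gadget vertices, at least one of $u_v,v_u$ to be an isolated singleton component, and since the gadget pairs of distinct non-edges are disjoint this gives $k_2 \ge \sum_{i} e^-(V_i)$; a vertex cover of the complement inside each $V_i$ (one endpoint per non-edge suffices, so minimality is not actually needed) then splits $V_i$ into at most $1+e^-(V_i)$ cliques, for at most $k_1+k_2=k$ cliques in total. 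Your static counting argument is arguably more elementary, avoiding the exchange/min-cut machinery (where one must verify that a minimal cut splits the component into exactly two pieces and that both stay colorful), whereas the paper's surgery yields a cleaner certificate in which the final components literally are the cliques. Your last padding step to exactly $k$ parts is legitimate since $k\le |V|$ (removing all base edges gives a feasible MCC solution with exactly $|V|$ colorful components), a point the paper itself glosses over by concluding only ``at most $k$'' cliques.
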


\begin{theorem}
\label{lem:mcc-thm}
The Minimum Colorful Components problem does not admit polynomial time approximation within a factor of $n^{1/14 -\epsilon}$, for any $\epsilon > 0$, unless $ P = NP$, or within a factor of  $n^{1/2 -\epsilon}$, for any $\epsilon > 0$, unless $ ZPP = NP$, where $n$ is the number of vertices in the input graph.
\end{theorem}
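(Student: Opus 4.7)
The plan is to combine Lemmas~\ref{lem:mcc-red-1} and~\ref{lem:mcc-red-2} with the known inapproximability of Minimum Graph Coloring, equivalently Minimum Clique Partition~\cite{PM81}, paying careful attention to the size blow-up of the reduction. Together, the two lemmas show that the optimum of the constructed MCC instance $G'$ equals the optimum of the Minimum Clique Partition instance $G$. Moreover, the proofs of the lemmas are constructive: given any feasible MCC solution for $G'$ with $k$ colorful components, one can in polynomial time extract a partition of $V$ into $k$ cliques. Hence any polynomial time $\alpha$-approximation for MCC transfers directly into a polynomial time $\alpha$-approximation for Minimum Clique Partition on $G$.

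The only remaining ingredient is the relationship between the sizes of the two instances. Writing $n := |V|$ and $n' := |V'|$, the vertex set $V' = V'_b \cup V'_a$ satisfies $|V'_b| = n$ and $|V'_a| = 2\cdot|\bar E| \le n(n-1)$, where $\bar E$ denotes the set of non-edges of $G$. Hence $n' \le n^2$, or equivalently $n \ge \sqrt{n'}$, and the reduction is clearly computable in polynomial time.

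Now suppose, toward a contradiction, that a polynomial time algorithm approximates MCC within a factor $(n')^{1/14-\epsilon}$ for some $\epsilon > 0$. Composing it with our reduction yields a polynomial time approximation for Minimum Clique Partition of ratio at most $n^{2(1/14-\epsilon)} = n^{1/7-2\epsilon}$, contradicting the $|V|^{1/7-\epsilon'}$ inapproximability of Graph Coloring under $P \ne NP$~\cite{BGS98} (take $\epsilon' = 2\epsilon$, which is admissible for any $\epsilon < 1/14$, and the statement is vacuous otherwise). The same argument with the exponent $1/2$ in place of $1/14$ yields approximation ratio $n^{1-2\epsilon}$ for Minimum Clique Partition, contradicting the $|V|^{1-\epsilon'}$ inapproximability under $ZPP \ne NP$~\cite{Feige98}.

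The argument is almost pure bookkeeping once Lemmas~\ref{lem:mcc-red-1} and~\ref{lem:mcc-red-2} are in place; the only point where one must be careful is to confirm that $|V'| = O(|V|^2)$ rather than something larger (which would weaken the exponent), and that the backward direction of the reduction really produces the clique partition algorithmically from an approximate MCC solution, so that the hardness transfers and not merely the equality of optima.
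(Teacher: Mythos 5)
Your proposal is correct and follows essentially the same route as the paper: combine Lemmas~\ref{lem:mcc-red-1} and~\ref{lem:mcc-red-2} to equate the two optima, observe $|V'| \le |V|^2$, and halve the exponents of the known $|V|^{1/7-\epsilon}$ (under $P \neq NP$) and $|V|^{1-\epsilon}$ (under $ZPP \neq NP$) hardness of Minimum Clique Partition. Your write-up is in fact slightly more careful than the paper's, since you make explicit both the constructive extraction of a clique partition from an approximate MCC solution (needed for the hardness to transfer, and indeed provided by the min-cut argument in Lemma~\ref{lem:mcc-red-2}) and the $\epsilon' = 2\epsilon$ bookkeeping, which the paper leaves implicit.
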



\section{Conclusions and future work}

In this paper we study the Colorful Components framework, which arises from applications in biology. We study three problems from this framework: Minimum Singleton Vertices, Maximum Edges in Transitive Closure and Minimum Colorful Components. First, we show a polynomial time exact algorithm for MSV, thus disproving the conjecture of Zheng et al.~\cite{ZSLS11} that the problem is $ NP$-hard. Then, we prove and strengthen another conjecture in~\cite{ZSLS11}, by showing that MEC is $NP$-hard and $APX$-hard. Finally, we show that MCC does not admit polynomial time approximation within a factor of $|V|^{1/14 - \epsilon}$, for any $\epsilon > 0$, unless $ P = NP$, or within a factor of $|V|^{1/2 - \epsilon}$, unless $ ZPP = NP$.

Notice that the $ APX$-hardness result for the MEC problem requires that the input graphs are colored with at least $4$ colors. A natural question is, thus, to settle the complexity of the problem for $3$ colors (as for the case of two colors MEC is easily solvable in polynomial time, using a maximum matching algorithm). Another open question is to design approximation algorithms for the MEC problem or to strengthen the hardness of approximation result.

From the biological perspective it is interesting to analyze how our MSV algorithm behaves on real data. Finally, we mention that an intriguing and challenging task is to find others problems in this framework that admit practical algorithms and are also meaningful for the biological applications.

\bibliographystyle{plain}
\bibliography{bibliography}

\newpage
\appendix
\section{Proofs from Section \ref{sec:poly_algo}}

\begin{proof}[of Lemma \ref{lem:msv_analysis_stop}]
If procedure {\sc Alternating\_Path} returns {\sc no\_path\_found}, then it returns in line $19$, i.e., checking the condition ``$|N'| > 0$'' (line $4$) failed. We show that just before the procedure ends, the following inequality holds:
$$|V'| - |N_{C \setminus\{c\}}(V')| \ge |S_c| \enspace.$$

If the loop in line $4$ has never been entered, we have $V'=S_c$, $N_{C \setminus\{c\}}(V') = N' = \emptyset$, and therefore $|V'| - N_{C \setminus\{c\}}(V') = |S_c|$.

Each vertex $v \in N_{C \setminus\{c\}}(V')$ has been inserted into $N'$ at some step of the procedure, and subsequently processed either in line $5$ or line $9$. As that did not cause the algorithm to return in line $7$ or $11$, we must have:
\begin{itemize}
\item $v$ is not a leaf of a star in $G'$, and
\item the connected component containing $v$ contains a vertex colored with $c$.
\end{itemize}
As each connected component in $G'$ is a singleton, an edge or a star, and the color of $v$ is different from $c$ (from the definition of $N_{C \setminus\{c\}}(V')$), we have the following two possibilities:
\begin{itemize}
\item the connected component of $G'$ containing $v$ is an edge, and the other endpoint of the edge has color $c$, or
\item the connected component of $G'$ containing $v$ is a star containing a vertex of color $c$, and $v$ is the center of the star.
\end{itemize}
We get that any two elements of $N_{C \setminus\{c\}}(V')$ are in different connected components of $G'$, and each vertex $v \in N_{C \setminus\{c\}}(V')$ has some neighbor $n(v)$ in $G'$. Each vertex $n(v)$ has been added to the set $V'$ when the element $v$ has been processed by the procedure (line $13,15$). We get that any two elements $v_1,v_2 \in N_{C \setminus\{c\}}(V')$ are in different connected components of $G'$, any two vertices $n(v_1),n(v_2)$ are different. As the elements from $S_c$ are singletons in $G'$, and therefore cannot be equal $n(v)$, and $S_c \subseteq V'$, we get $|V'| \ge |S_c| + |N_{C \setminus\{c\}}(V')|$. We obtained the desired inequality.

We have shown that for the set of vertices $V' $ we have $|V'| - N_{C \setminus\{c\}}(V') \ge |S_c|$. As $V' \subseteq V_c$, we get $|S_c| \le \max_{V'' \in V_c}(|V''| - N_{C \setminus\{c\}}(V'')) = s_c$. As $s_c$ is a lower bound on $|S_c|$ (see Lemma \ref{lem:MSV_lb}), we get $|S_c| = s_c$.
\qed
\end{proof}


\begin{proof}[of Lemma \ref{lem:msv_analysis_graph}]
$a)$ First let us show that the procedure {\sc Path\_From(v)} always returns a finite path, and such that the first vertex of $p$ in $S_c$. Any vertex which is assigned to the set $N'$ in line $2$ is assigned a predecessor from the set $S_c$ (line $3$). Any vertex $v$ assigned to $N'$ later, i.e., in some $i$-th iteration of the loop (line $16$), is assigned a predecessor pred($v$) $\in V'$, and such that pred($v$) entered $V'$ in the same $i$-th iteration of the loop. Any vertex $v \in V' \setminus S_c$ enters the set $V'$ in some $i$-th iteration of the loop (line $13,15$), and then is assigned a predecessor pred($v$) $\in N'$, such that pred($v$) has been assigned to $N'$ in the previous iteration of the loop (or in line $2$, in case $i=1$). This shows that the procedure {\sc Path\_From(p)} does not loop, and it will eventually (i.e., after at most $|V|$ steps) find a beginning of a path, which is a vertex from the set $S_c$.

We will now show that every odd vertex of the path $p$ is in $V_c$ (except possibly of the last vertex of $p$, if it is the vertex $v'$ appended to the path directly by the procedure in line $7$), and every even vertex is in $V \setminus V_c$. We already know that the first vertex of the path is in $S_c \subseteq V_c$. As all vertices from the set $V'$ have color $c$, and all vertices from the set $N'$ have color different from $c$, a predecessor of a vertex from $V'$ is in $N'$ (line $14$) and a predecessor of a vertex from $N'$ is in $V'$ (line $17$), the claim follows. Notice that if the procedure reaches line $7$, then there are no color requirements for the last vertex $v'$ appended at the end of the path directly, and not using the procedure {\sc path\_from} (line $7$).

We will now show that every even edge of the path $p$ is in $E'$, and every odd edge of $p$ is in $E \setminus E'$, which will prove that $p$ is an alternating path. Let us consider even edges first. An even edge is an edge between some odd vertex $v$ and a preceding vertex $w$. There can be two cases, and for both of them we obtain that the edge is in $E'$:
\begin{itemize}
\item $v$ is a vertex appended to the path directly by the procedure in line $7$. Then the edge connecting $v$ with the preceding vertex $w$ is in $E'$ (see line $7$).
\item $v$ has been appended to the path by the procedure {\sc path\_from}). Then $w=$pred($v$) and, from the paragraph above, $v \in V_c$. A vertex from $V_c$ is connected with its predecessor via an edge in $E'$ (see line $14$).
\end{itemize} 

Now let us consider odd edges. As the path $p$ starts in a singleton vertex of $G'$, the first (odd) edge of the path is not in $E'$. Let $(v',v)$ be any other odd edge of $p$. We have $v'=\textrm{pred}(v)$, $v' \in V_c$. Let $w=\textrm{pred}(v')$. As $w$ has been processed by the procedure in an earlier loop than $v$ (see the first paragraph of the proof), and processing $w$ did not cause the procedure to return in line $7$ or $11$, one of the following holds (as each connected component of $G'$ is a singleton, an edge or a star):
\begin{itemize}
\item the connected component of $G'$ containing $w$ is an edge, and the other endpoint of the edge has color $c$, or
\item the connected component of $G'$ containing $w$ is a star containing a vertex of color $c$, and $v$ is the center of the star.
\end{itemize}
As $(w,v')$, as an even edge of the path, belongs to $E'$, that gives us that $v'$ has degree one is $G'$ (either as an endpoint of an edge, or a leaf of a star), and so $(v',v) \notin E'$.
\vskip10pt

$b)$ From $a)$ we know that the path $p$ applied to $G'$ to construct $G''$ is an alternating path, i.e., after applying it the degree of each vertex other than the endpoints of the path does not change. The path starts with a vertex $v \in S_c$ (see $a)$), which is a singleton in $G'$,
and therefore the first edge of $p$ is not in $E'$. The first edge of $p$ is added to the graph and $v$ stops being a singleton vertex. That decreases the number of singleton vertices of color $c$ by one.

We now have to consider the last edge of the path. Again, if the edge is not in $E'$ then the endpoint of the path gets one additional edge incident with it, and so it cannot become a singleton. The only possibility when the last edge of $p$ is in $E'$ is when the path has an even number of edges (as it is an alternating path starting with an edge in $E \setminus E'$), i.e., it ends with an odd vertex. The procedure {\sc path\_from} is always invoked for a vertex $v \in N'$ (line $6$,$10$) and the path returned by it has an odd number of edges (see the $a)$, where we show that such path alternates between vertices from $V'$ and $N'$). The only possibility that the path has an even number of edges is when it is generated in line $7$, when an additional edge $(v,v')$ is appended at the end of the path. Then the edge $(v,v')$ is removed from $G''$ and the degree of $v'$ drops by one. However, from line $5$ we get that then $v$ is a leaf of a star, and as $(v,v') \in E'$ we have that $v'$ is a center of a star. The degree of $v'$ in $G'$ is at least $2$, so $G'$ does not become a singleton after applying the path $p$ to $G'$.
\vskip10pt

$c)$ As every connected component of $G'$ is colorful, it is enough to consider components of $G''$ which contain some newly added edge. Let $(u,v) \in E$ be an edge added to $G''$, i.e., an edge from $p$ which is in $E \setminus E'$. From the discussion in $a)$ we know that $(u,v)$ is then an odd edge of the path, and it connects an even vertex $v \in V \setminus V_c$ with its predecessor $u=$pred($v$)$\in V_c$.

We will now show that the degree of $u$ in $G'$ is at most one. If $u$ is the start of the path, it has degree $0$ in $G'$. Otherwise, considering the predecessor of $u$ and using the same arguments as in $a)$ we show that $u$ is either an endpoint of a path or a leaf of a star in $G'$. In this case the degree of $u$ in $G'$ is one. As the edge connecting $u$ with its predecessor in $p$ is in $E'$, it will be removed from $G'$. The vertex $u$ is a leaf in the connected component of $G''$. 

If vertex $v$ is not the last even vertex of the path, then, from the construction of $p$ and the discussion in $a)$, the successor of $p$ is some vertex $w$ of color $c$, and the edge $(v,w) \in E'$. Then the connected component of $v$ (which in $G'$ was either an edge or a star centered at $v$, again from the discussion in $a)$) obtains a new vertex $v$ of color $c$, but on the other hand loses some other vertex $w$ of color $c$. The component remains colorful.

If vertex $v$ is the last even vertex of the path, and the procedure returned in step $7$ after processing $v$, $v$ has been detached from its component in $G'$ (which was a star), and the new component is an edge connecting $u$ and $v$, and it is colorful.

Finally, if vertex $v$ is the last even vertex of the path, and the procedure returned in step $11$ after processing $v$, the connected component of $v$ in $G'$ did not contain vertex of color $c$, so a new vertex of color $c$ can be attached to it and the component remains colorful.  
\vskip10pt

$d)$ We show it similarly as $c)$. As every connected component of $G'$ is either a singleton, an edge, or a star, and removing the edges does not change this property, it is enough to consider components of $G''$ which contain some newly added edge $(u,v) \in E \setminus E'$. As in case $c)$, the degree of $u$ in $G'$ is at most one, and $u$ becomes a leaf in the connected component of $G''$.

Considering the same three cases as in $c)$ we have that either:
\begin{itemize}
\item $v$ is not the last even vertex of the path: then the connected component of $v$ (which in $G'$ was either an edge or a star centered at $v$) gets one leaf attached at $v$, at the same time losing another leaf attached at $v$, or
\item $v$ is the last even vertex of the path, and the procedure returned in step $7$ after processing $v$: new component is an edge connecting $u$ and $v$, or
\item $v$ is the last even vertex of the path, and the procedure returned in step $11$ after processing $v$: in this case, as $v$ has not been a leaf of a star, it could either be a singleton, an endpoint of an edge, or a center of a star; in any of these cases attaching a leaf to $v$ makes the connected component an edge or a star.
\end{itemize}
The connected component containing the edge $(u,v)$ is either an edge or a star.
\qed
\end{proof}

\section{Proofs from Section~\ref{MEC}}

\begin{proof}[of Lemma~\ref{lem-mec-2}]
To prove the lemma it is enough to show that if the transitive closure of $G'$ has more than $12m-\epsilon m$ edges, we can extract from $G'$ an assignment $f$ for $\phi$ which satisfies at least a $1-O(\epsilon)$ fraction of clauses. For the rest of the proof we assume that the transitive closure of $G'$ has more than $12m-\epsilon m$ edges.

First, observe that each connected component of $G'$ has size at most $4$, since there are only $4$ colors of the vertices in the graph. Also, notice that there are in total $m$ vertices of color $c$, $6m$ vertices of color $v$ (as the total number of literals in the formula $\phi$ equals $3m$), $3m$ vertices of color $a$, and $3m$ vertices of color $b$.

We  now show that $G'$ has at least $(1-\epsilon)m$ connected components consisting of $4$ vertices. Let $\alpha_1, \alpha_2, \alpha_3$ and $\alpha_4$ denote the number of connected components of $G'$ of size $1,2,3$ and $4$, respectively. The number of edges in the transitive closure of $G'$ equals $\textrm{OPT}=6\alpha_4+3\alpha_3+\alpha_2$. As $G'$ has $7m$ vertices of color other than $v$, exactly $3$ such vertices are in each component of size $4$, at least two such vertices are in each component of size $3$, and at least one such vertex is in each component of size $2$, we get: $\alpha_3 \le (7m-3\alpha_4)/2$ and $\alpha_2 \le (7m - 3\alpha_4 - 2\alpha_3)$. We get
$$\textrm{OPT}=6\alpha_4+3\alpha_3+\alpha_2 
\le 7m+3\alpha_4+\alpha_3
\le 10.5m + 1.5\alpha_4\enspace.$$
As we assumed $\textrm{OPT} > 12m-\epsilon m$, we get that $\alpha_4 \ge (1-\epsilon)m$.

Let us now consider a subgraph $G'_x$ of $G'$ corresponding to the variable $x$. $G'_x$ consists of vertices $a^x_i, b^x_i, v^x_i$ and $w^x_i$ for $1=1,\ldots ,n_x$, and additionally of the clause-vertices $c_j$ which are incident in $G'$ with any of the vertices $v^x_i$ and $w^x_i$. As each clause-vertex has degree at most $1$ in $G'$ (as all the neighbors of a clause-vertex have the same color $v$), it belongs to at most one subgraph $G'_x$. Notice that each edge of $G'$ is contained in some subgraph $G'_x$, and therefore the edges of the transitive closure of $G'$ are the union of the edges of the transitive closures of $G'_x$.

\begin{figure}[t]
\begin{center}
\includegraphics[scale=1]{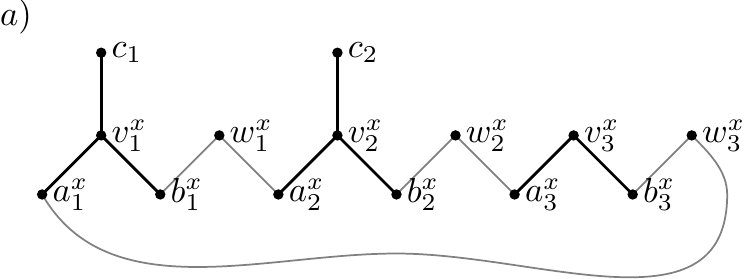}
\includegraphics[scale=1]{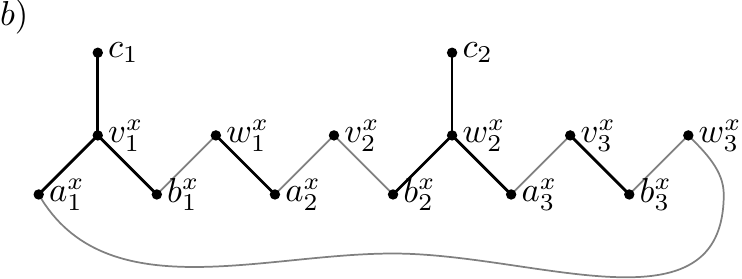}
\end{center}
\caption{$a)$ Graph $G'_x$ maximizing the possible number of edges ($3 \alpha^x + 3 n_x$) in the transitive closure. $b)$ An inconsistent graph $G'_x$ cannot achieve $3 \alpha^x + 3 n_x$ edges in the transitive closure.}
\label{fig:mec-reduction-2}
\end{figure}

We say that $G'_x$ is \emph{inconsistent} if there are two vertices $v^x_i$ and $w^x_j$ in $G'_x$ (where possibly $i=j$), such that both are incident with some clause-vertices $c_{i'}$ and $c_{j'}$ in $G'$. We  now show that $G'$ has at most $\epsilon m$ inconsistent subgraphs $G'_x$. Let $\alpha^x \le n_x$ be the number of clause-vertices $c_i$ which belong to $G'_x$. Apart from these $\alpha^x$ vertices of color $c$, $G'_x$ has also $n_x$ vertices of colors $a$ and $b$, and $2n_x$ vertices of color $v$. It is straightforward to verify that the transitive closure of $G'_x$ has at most $3\alpha^x + 3n_x$ edges (see Figure \ref{fig:mec-reduction-2}$a$ for an example, where this bound is obtained). Moreover, if $G'_x$ is inconsistent, its transitive closure has at most $3\alpha^x + 3n_x-1$ edges (see Figure \ref{fig:mec-reduction-2}$b$). 
If $G'$ has more than $\epsilon m$ inconsistent subgraphs $G'_x$, we have $\textrm{OPT} \le \sum_x 3 \alpha^x + \sum_x 3n_x - \epsilon m = 12m - \epsilon m$, which contradicts our assumption.

We  now extract from $G'$ an assignment $f$ for $\phi$ which satisfies at least a $(1-\epsilon(\beta+1))$ fraction of clauses. We proceed as follows. From our previous reasoning we know that $G'$ has at least $(1-\epsilon)m$ connected components consisting of $4$ vertices, and that at most $\epsilon m$ subgraphs $G'_x$ are inconsistent. We fix the assignment $f$ as follows. For each variable $x$, if $G'_x$ is inconsistent or if $G'_x$ does not contain any vertices $c_i$, we set $f(x)$ arbitrarily. Otherwise, either \emph{all} vertices $c_i$ from the component $G'_x$ are incident with vertices $v^x_j$ (corresponding to the literal $x$), or they are all incident with vertices $w^x_j$ (corresponding to the literal $\neg x$). If the first case holds, we set $f(x)$ to TRUE. Otherwise, we set $f(x)$ to FALSE. 

We  now show a lower bound on the number of clauses satisfied by $f$. At least $(1-\epsilon)m$ clause-vertices $c_i$ are incident with some variable-vertex (as there are at least $(1-\epsilon)m$ connected components of size $4$). Each variable occurs at most $\beta=O(1)$ times in $\phi$, and at most $\epsilon m$ subgraphs $G'_x$ are inconsistent, and therefore at most $\epsilon \beta m$ clause-vertices are incident with variables from an inconsistent subgraph. Therefore at least $m(1-\epsilon(\beta+1))$ clauses are satisfied by the assignment $f$. \qed
\end{proof}

\begin{proof}[of Lemma~\ref{lem-mec-1}]
We construct a graph $G''=(V,E'')$ which is a subgraph of $G$ in the following way (see Figure~\ref{fig:mec-reduction}). Fix a satisfying assignment $f$ for $\phi$. For each clause, represented by a vertex $c_i$, we choose arbitrarily a literal $x$ ($\neg x$) which is satisfied by the assignment $f$. Let $v^x_j$ ($w^x_j$, respectively) be the vertex corresponding to the chosen literal which is incident with $c_i$ in $G$. We add the edge $(c_i,v^x_j)$ ($(c_i,w^x_j)$, respectively) to $G''$. Additionally, each vertex $v^x_j$ and $w^x_j$ associated with a literal satisfied by $f$ is connected in $G''$ with the neighboring vertices of color $a$ and $b$. 

It is straightforward to check that $G''$ is a feasible solution for the MEC problem (i.e., each connected component of $G''$ is colorful), and that $G''$ has $m$ connected components containing $4$ vertices, $2m$ connected components containing $3$ vertices, and $3m$ singletons. The transitive closure of $G''$ has $6\cdot m+3\cdot 2m = 12m$ edges. As $G'$ is an optimal solution for the MEC problem in $G$, the transitive closure of $G'$ has at least as many edges as the transitive closure of $G''$. \qed
\end{proof}

\section{Proofs from Section~\ref{MCC}}

\begin{proof}[of Lemma~\ref{lem:mcc-red-1}]
Let $G$ be a graph which can be partitioned into $k$ cliques. We have to show that there is a collection of edges $E'' \subseteq E'$ in $G'$, such that after removing $E''$ from $G'$ we obtain a graph consisting of at most $k$ colorful components. The set of edges $E''$ is exactly the set of base edges that have been removed from $G$ to obtain the collection of $k$ cliques. 

As we do not remove any additional edges of $G'$ (i.e., the edges from the set $V'_a$), the resulting graph consists of $k$ connected components. The only pairs of vertices sharing the same color are pairs $u_v, v_u$ such that $u,v \in V$ and $(u,v) \notin E$. Then $u$ and $v$ must be in different connected components of the clique partition, and so $u$ and $v$ (and therefore also $u_v$ and $v_u$) are in different connected components of the constructed graph. Each connected component of the constructed graph is colorful. \qed
\end{proof}

\begin{proof}[of Lemma~\ref{lem:mcc-red-2}]
Let $G'$ be a graph which can be transformed, by removing a collection of edges $E'' \subseteq E'$, into a graph consisting of $k$ connected colorful components. We  show that we can modify $E''$, without increasing the number of connected components in the resulting graph and while ensuring that each connected component stays colorful, so that $E''$ does not contain any edge from the set of additional edges $E'_a$. Then, by removing $E''$ from $G$, we obtain a valid partition of $G$ into at most $k$ cliques: For any pair of vertices $u,v \in V$ such that $(u,v) \notin E$, there are two vertices $u_v$ and $v_u$ in $G'$, sharing the same color and connected with $u$ and $v$, respectively, via additional edges. As no additional edges are contained in $E''$ and each connected component of $(V',E' \setminus E'')$ is colorful, $u$ and $v$ must be in different connected components of $(V',E' \setminus E'')$. In the partition of $G$ the vertices $u$ and $v$ are then also in different connected components, and so we obtain a partitioning of $G$ into at most $k$ cliques.

We now show how to modify $E''$. For each additional edge $e=(u,u_v) \in E'_a$ which is in $E''$ we perform the following operation. First, we remove $e$ from $E''$. That decreases the number of connected components by one, but might result in an infeasible solution. However, the only pair of vertices of the same color which are in the same connected component of $(V',E' \setminus E'')$ can now be $u_v$ and $v_u$. Denote by $C$ the connected component of $(V',E' \setminus E'')$ containing $u_v$ and $v_u$, and therefore also $u$ and $v$. We now find a minimum cut separating $u$ from $v$ in $C$, and add the edges of the cut to $E''$. That results in splitting $C$ into exactly two connected components, and each of the two components is colorful. 

We perform the above operation for each additional edge $e=(u,u_v) \in E''$, and at the end we obtain a set $E''$ satisfying the needed conditions. By removing $E''$ from $G$ we get a partition of $G$ into at most $k$ cliques. \qed
\end{proof}

\begin{proof}[of Theorem~\ref{lem:mcc-thm}]
Let $G=(V,E)$ be an instance of the Minimum Clique Partition problem, and $G'=(V',E')$ the corresponding instance of the MCC problem, obtained by our reduction. From Lemmas \ref{lem:mcc-red-1} and \ref{lem:mcc-red-2} we obtain, that the cost of an optimal solution for the MCC problem for $G'$ is the same as the cost of an optimal solution for the Minimum Clique Partition problem for $G$. We know that the Minimum Clique Partition problem is hard to approximate within a factor of $|V|^{1/7 - \epsilon}$, unless $ P = NP$, or  within a factor of  $|V|^{1/2 - \epsilon}$, unless $ ZPP = NP$,  were $|V|$ is the number of vertices in the graph $G$. Since $G'$ has $|V'| \le |V|^2$ vertices, our theorem follows. \qed
\end{proof}

\end{document}